\DeclarePairedDelimiter\abs{\lvert}{\rvert}
\DeclarePairedDelimiter\norm{\lVert}{\rVert}
\let\oldabs\abs
\def\abs{\@ifstar{\oldabs}{\oldabs*}}
\let\oldnorm\norm
\def\norm{\@ifstar{\oldnorm}{\oldnorm*}}
\title{Guarding Path Polygons with Orthogonal Visibility}
\author{Hamid Hoorfar \thanks{Department of Computer Engineering and Information Technology, Amirkabir University of Technology (Tehran Polytechnic), {\tt \{hoorfar,ar\_bagheri\}@aut.ac.ir}}\and Alireza Bagheri\footnotemark[1]~\footnote{\textit{Coresponding author}.}}
\let\runauthor\@author
\let\runtitle\@title
\begin{document}
\maketitle

\begin{abstract}
We are interested in the problem of guarding simple orthogonal polygons with the minimum number of $ r $-guards. The interior point $ p $ belongs an orthogonal polygon $ P $ is visible from $ r $-guard $ g $, if the minimum area rectangle contained $ p $ and $ q $ lies within $ P $. A set of point guards in polygon $ P $ is named guard set (as denoted $ G $) if the union of visibility areas of these point guards be equal to polygon $ P $ i.e. every point in $ P $ be visible from at least one point guards in $ G $. For an orthogonal polygon, if dual graph of vertical decomposition is a path, it is named path polygon. In this paper, we show that the problem of finding the minimum number of $ r $-guards (minimum guard set) becomes linear-time solvable in orthogonal path polygons. The path polygon may have dent edges in every four orientations. For this class of orthogonal polygon, the problem has been considered by Worman and Keil who described an algorithm running in  $ O(n^{17} poly\log n) $-time where $ n $ is the size of the input polygon. The problem of finding minimum number of guards for simple polygon with general visibility is NP-hard, even if polygon be orthogonal. Our algorithm is purely geometric and presents a new strategy for $ r $-guarding orthogonal polygons and guards can be placed everywhere in the interior and boundary of polygon. 
\end{abstract}

\section{Introduction}
The target of the art gallery problem is finding a set $ G $ of point guards in polygon $ P $ such that every point in $ P $ is visible from some members of $ G $ where a guard $ g $ and a point $ p $ are visible if the line-segment $ gp $ is contained in $ P $. It is shown that finding the optimum number of guards (the minimum guard set) required to cover an arbitrary simple polygon is NP-hard~\cite{lee1986computational}. The art gallery problem is also NP-hard for orthogonal polygons and even remains NP-hard for monotone polygons~\cite{schuchardt1995two}. In the \textit{orthogonal art gallery} problem, it is assumed that the visibility is in orthogonal mode instead of standard line visibility. In the polygon $ P $ and under \textit{orthogonal visibility(r-visibility)}, points $ p $ and $ q $ are visible from each other, if the minimum axis-aligned rectangle spanned by these two points is contained in $ P $, This kind of visibility is , also, called \textit{r-visibility}~\cite{o2004visibility} i.e. two points $ p $ and $ q $ are \textit{r-visible} (\textit{orthogonally visible}) from each other if the minimum area rectangle contained $ p $ and $ q $ has no intersection with the exterior of $ P $. A polygon is \textit{orthogonal} if its edges are either horizontal or vertical, in every orthogonal polygon the number of vertical edges is equal to the number of horizontal ones. Worman and Keil~\cite{worman2007polygon} studied the decomposition of orthogonal polygons into optimum number of r-star(star-shaped) sub-polygons that is equivalent to the orthogonal art gallery problem. They presented a polynomial-time algorithm for the problem under r-visibility, so, they showed that the problem is polynomially solvable. Their algorithm is processable in $ O(n^{17}poly \log n) $, hence, it is not so fast. A \textit{cover} of a polygon $ P $ by a set $ S $ of sub-polygons is defined such that the union of the sub-polygons in $ S $ be equal to $ P $ and the sub-polygons are required to be mutually disjoint except along their boundaries. \textit{$ r $-star} is an orthogonal star-shaped polygon, and every r-star polygons are \textit{orthoconvex} that will defined later. Clearly, the problem of determining a minimum cover of a simple orthogonal polygon by r-stars is equivalent to determining a minimum set of r-visibility guards to guard the entire polygon i.e. finding minimum covers by star-shaped sub-polygons is equivalent to finding the minimum guard set needed such that every point in the polygon is visible to some guards. A linear-time ($O(n)$-time) algorithm for covering a $ x $-monotone orthogonal polygon with the minimum number of $ r $-star polygons was presented by Gewali and et. al. ~\cite{gewali1996placing}. Palios and Tzimas~\cite{palios2014minimum} considered the problem on class-3 orthogonal polygons without holes, i.e., orthogonal polygons that have reflex edges (dents) along at most $ 3 $ different orientations. They presented an algorithm with time complexity of $ O(n+k \log ⁡k)$ where $ k $ is the size of a minimum r-star cover(the size of output). It is shown that problem is NP-hard on orthogonal polygons with holes by Beidl and Mehrabi~\cite{biedl2016r}. A polygon is named \textit{tree polygon} if dual graph of the polygon is an undirected graph in which any two nodes are connected by exactly one path(tree graph). Also, They gave an algorithm for tree polygon in $ O(n) $-time. If vertex guards are only allowed (vertex guard variant), iterations of their algorithm yields an $ O(n^4) $ solution for general orthogonal polygons~\cite{couto2007exact}. If $ S $ be a set of points in the polygon $ P $, and every two points of $ S $ are not visible from each other, then $ S $ is called textit{hidden set}. So, If a hidden set is also a guard set, it is called \textit{hidden guard set}. Hoorfar and Bagheri~\cite{hoorfar2017minimum} showed that finding the minimum number of guards is linear-time even under the constraint that the guards are hidden from each other, for some monotone polygons in the orthogonal art gallery problem. In this paper, we study the orthogonal art gallery problem on path orthogonal polygons. We take advantage of geometric properties of these polygons and we present an $ 1 $-pass $ O(n) $-time algorithm to report the locations of a minimum-cardinality set of $ r $-visibility guards to cover the entire polygon, where $ n $ is the number of vertices of given path polygon. This is the one of the few purely geometric algorithm for this problem. The first one is presented by Palios and Tzimas~\cite{palios2014minimum} and another one is given by Hoorfar and Bagheri~\cite{hoorfar2017minimum,hoorfar2017linear}. Actually, we generalize the ideas of the latter papers to yield faster algorithms for the problem on path orthogonal polygons. Note that a path polygon is not tree polygon and have the dent edges in every four orientations, so the fastest known algorithm for it, have time complexity of $ O(n^{17}poly \log n) $ and presented by Worman and Keil~\cite{worman2007polygon}. In the other word, we show that the $ r $-guarding problem is linear-time solvable on path polygons without holes. Comparing our results to the one by Worman and Keil, their algorithm works for a broader class of polygons, but is too slower. In this paper, visibility means r-visibility (orthogonal visibility) and guarding is under r-visibility and also, monotonicity means $ x $-monotonicity unless explicitly mentioned.

\section{Preliminaries}
\label{ss:ss1}
Assume $ P $ is an orthogonal polygon with $ n $ edges, the interior angles of all reflex vertices belonged to $ P $ are equal to $ \frac{3\pi}{2} $ and  the interior angles of all convex vertices belonged to $ P $ are equal to $ \frac{\pi}{2} $. It is obvious that the number of reflex vertices of an orthogonal polygon with $ n $ vertices is equal to $ \frac{n-4}{2} $ and the number of its convex vertices is equal to $ \frac{n+4}{2} $, exactly. A \textit{decomposition} of an orthogonal polygon $ P $ obtain by extending the edges of $ P $ incident to their reflex vertices until intersect the boundary. Therefore, by plotting these vertical and horizontal line segments, at most $ (\frac{n-2}{2})^2 $ rectangles are obtained, then we have a \textit{partition}, such that the union of the parts of partition be equal to $ P $ and the parts be mutually disjoint except along their boundaries. Every obtained rectangle part is named $ pixel $. If we assign a node to each pixel and then connect every two nodes of which their corresponding pixels are adjacent, by one edge, the created graph is called \textit{dual graph}. Some orthogonal polygons are named according to the type of their dual graphs. An orthogonal polygon is called \textit{tree}, if dual graph of the polygon is a tree and an orthogonal polygon is called \textit{$ k $-width} where its dual graph be a $ k $-width tree. If after decomposition of $ P $, the vertices of all the pixels lie on the boundary of $ P $, the polygon is named \textit{thin}. A tree polygon is a thin polygon without hole. A \textit{vertical decomposition} of an orthogonal polygon $ P $  with $ n $ vertices obtain by extending only the vertical edges of $ P $ incident to their reflex vertices until intersect the boundary. So, after the vertical decomposition of $ P $, at most $ \frac{n-2}{2} $ rectangles will be obtained, this kind of partition is called \textit{vertical partition}. If we assign a node to each rectangle and then connect every two nodes of which their corresponding rectangles are adjacent, by one edge, the created graph is called dual graph of vertical decomposition. An orthogonal polygon is called \textit{path }, if dual graph of its vertical decomposition (not general decomposition) is a path, see figure~\ref{fi:fig1}.
\begin{figure}
	\centering
	\includegraphics[width=\textwidth]{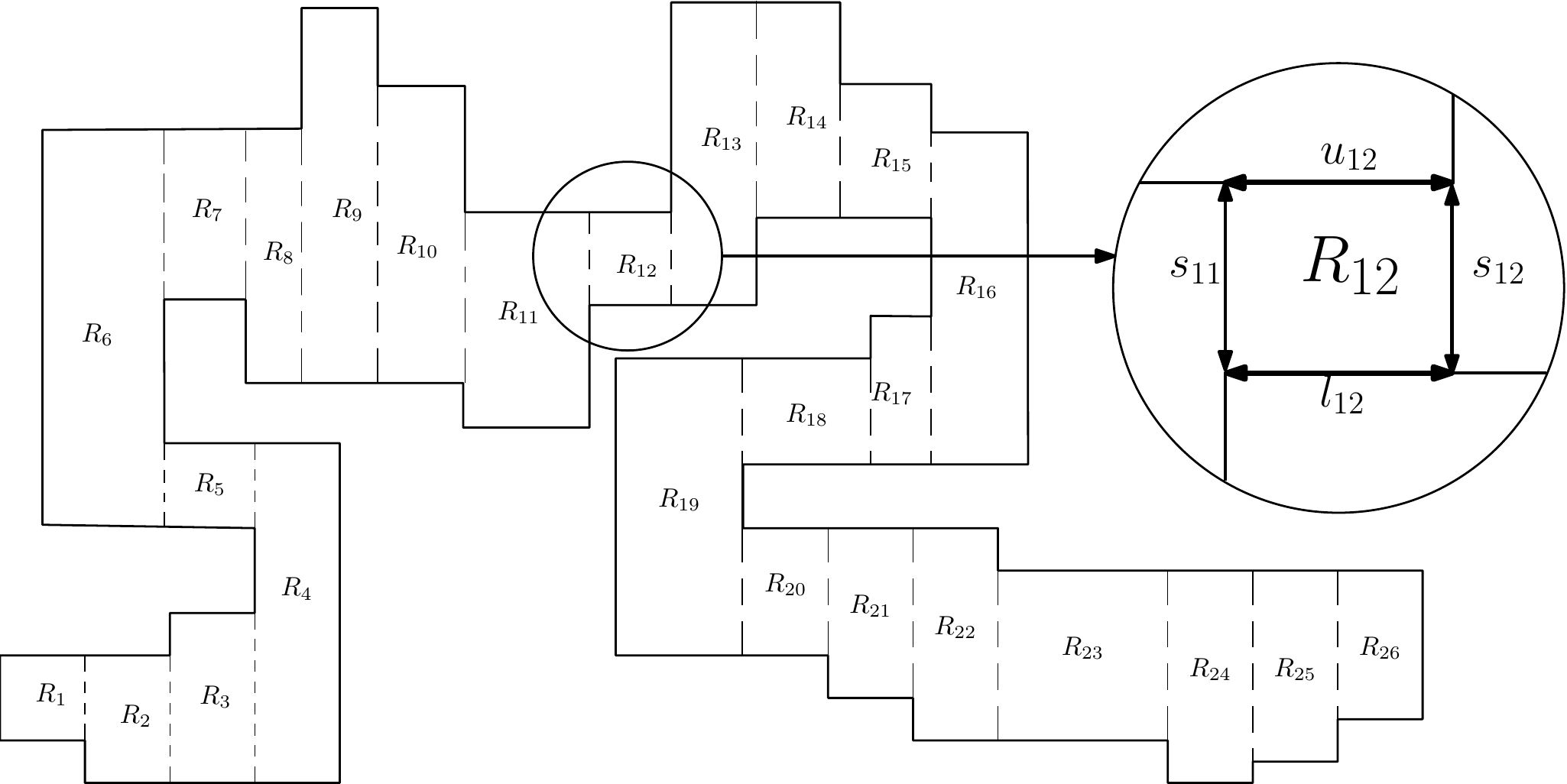}
	\caption{Illumination of the vertical decomposition of a path polygon and its notations.}
	\label{fi:fig1}
\end{figure}
The number of horizontal and vertical edges of an orthogonal polygon is the same. If an edge $ \epsilon_1\in P $ has two endpoints of angle $ \frac{\pi}{2} $, it is called \textit{tooth} edge and if an edge $ \epsilon\in E $ has two endpoints of angle $ \frac{3\pi}{2} $, it is called \textit{dent} edge. The \textit{edge direction} is defined as same as the direction of its normal vector from interior to exterior of the polygon. For $ i=1,2,3,4 $, the \textit{class-$ i $} of orthogonal polygon contains the polygons which have dent edge only in $ i $ different directions~\cite{palios2014minimum}. Every orthogonal polygon has tooth edges in all four directions, every orthogonal $ x $-monotone polygon has no dent edge in the directions perpendicular to the $ y $-axis and every orthogonal $ y $-monotone polygon has no dent edge in the directions perpendicular to the $ x $-axis. Not every polygon has dent edge, hence, an orthogonal polygon that has no dent edge is named \textit{orthogonally convex} and sometimes it is also called \textit{orthoconvex} polygon. The orthoconvex polygon is both $ x $-monotone and $ y $-monotone i.e. if polygon $ P $ is $ x $-monotone and also $ y $-monotone, then $ P $ is orthoconvex. Assume we decompose a simple path polygon $ P $ with $ n $ edges into rectangular parts (rectangles) obtained by extending every vertical edges incident to their reflex vertices of $ P $. The dual graph $ G $ of this vertical decomposition is path which has two node of degree one. The rectangles corresponding to these two nodes are called \textit{first rectangle} and \textit{last rectangle}. Let $R =\{R_{1},R_{2},\dots,R_{m}\} $ be the set of rectangles, where $ m = \frac{n-2}{2} $, ordered from first to last rectangles according to the order of their corresponding nodes in the graph $ G $. For an illustration see figure~\ref{fi:fig1}, $ R_1 $ and $ R_{26} $ are first and last rectangle in this example, respectively. We denote the upper horizontal edges of rectangle $ R_{i} $ by $ u_{i} $ and the lower horizontal edges of $ R_i $ by $ l_{i} $. Let consider that $  U =\{u_{1},u_{2},u_3,\dots,u_{m} \}  $ and $ L=\{l_{1},l_{2},l_3,\dots,l_{m} \} $. Two consecutive rectangles are mutually disjoint except along their boundaries, hence, the intersection between every two consecutive rectangles $ R_i $ and $ R_{i+1} $ is a vertical segment that is denoted as $ s_i $. See the rectangle that is enclosed in a circle in figure~\ref{fi:fig1} for the illumination. For every horizontal segment $ s $ the $ y $-coordinate of every points on $ s $ is the same, so, we denote the $ y $-coordinate of $ s $ by $ y(s) $. Similarly, For every vertical segment $ s' $ the $ x $-coordinate of  every points on $ s $ is the same, hence, we denote the $ x $-coordinate of $ s' $ by $ x(s') $. For a point $ p $, $ y $-coordinate and $ x $-coordinate of $ p $ is denoted by $ y(p) $ and $ x(p) $, respectively. Without reducing generality, We assume that for every two different vertical edges $ e $ and $ e' $, the $ x $-coordinates of both of them is not same($ x(e)\neq x(e') $), hence, it is clear that for every {\small $ 1 \leq j \leq m-1 $}, $ y(u_{j})=y(u_{j+1}) $ or $ y(l_{j})=y(l_{j+1}) $. Also, we denote the horizontal edge of $ P $ that contains segment $ u_{k} $  by $ e(u_{k}) $ and the horizontal edge of $ P $ that contains segment $ l_{i} $  by $ e(l_{i}) $. Let the sets $ E_{U}=\{e(u_{j})|1 \leq j \leq m\} $ and $ E_{L}=\{e(l_{j})|1 \leq j \leq m\} $ be sets of horizontal edges of upper chain and lower chain of $ P $ ordered corresponding to their rectangles order. In the set $ E $ of horizontal edges of $ P $, $ e_{M} $ is called \textit{local maximum} if $ e_M $ be higher than two neighbor horizontal edges ($ y(e_{M})> y(e_{M-1}) $ and $ y(e_{M})> y(e_{M+1}) $) and also, $ e_{m} $ is called \textit{local minimum} if $ e_m $ is lower than two neighbor horizontal edges ($ y(e_{m})< y(e_{m-1}) $ and $ y(e_{m})< y(e_{m+1}) $). If edge $ \epsilon_1 \in E_U $ be a local maximum, then the internal angles of its both endpoints are equal to $ \frac{\pi}{2} $ and if $ \epsilon_2 \in E_U $ be a local minimum, then the internal angles of its two endpoints are $ \frac{3\pi}{2} $. If edge $ \epsilon_3 \in E_L $ be a local minimum, then the internal angles of its both endpoints are equal to $ \frac{\pi}{2} $ and if $ \epsilon_4 $ be a local maximum, then the internal angles of its two endpoints are $ \frac{3\pi}{2} $. If $ e( u_{M}) $ be local maximum then $ u_{M} $ is called local maximum and if $ e( u_{m}) $ be local minimum then $ u_{m} $ is called local minimum. Every rectangle $ R_i $ has the height $ h_i=\abs{y(u_i)-y(l_i)} $ , so, in the set $ R $ of rectangles obtained by vertical decomposition, rectangle $ R_{l} $ is called \textit{local maximum} if its height be greater then two adjacent rectangles ($ h_l>h_{l-1} $ and $ h_l>h_{l+1} $), and $ R_y $ is named \textit{local minimum} if its height be less than two adjacent rectangles ($ h_y<h_{y-1} $ and $ h_y<h_{y+1} $). Every rectangle has two adjacent rectangles except the first and last ones which are have only one adjacent rectangle. Two objects $ o $ and $ o' $ in polygon $ P $ are defined as \textit{weak visible} if every point of $ o $ is visible to some point of $ o' $. The interior area of polygon $ P $, as denoted $int(P)$, is the set of points that are bounded by $ P $, the exterior area of $ P $, as denoted $ext(P)$, is the set of the nearby and far away exterior points and the boundary of $ P $, as denoted $bound(P)$, is the set of all points on the boundary of $ P $. Clearly, a polygon is union of $ int(P)$ and $ bound(P) $. If $ e $ be a horizontal line segment, then left endpoint of $ e $ is denoted as $ left(e) $ and right endpoints of $ e $ is denoted as $ right(e) $. , also, if $ e $ be vertical, then top endpoint of $ e $ is denoted as $ top(e) $ and down endpoint of $ e $ is denoted as $ down(e) $. A \textit{star-shape} polygon is a polygon $ \rho $ that has some internal points so that the entire $ \rho $ is straight-line visible form each of them. Similarly, A \textit{$ r $-star} polygon is an orthogonal polygon $ \varrho $ such that there exist some internal points which the entire $ \varrho $ is orthogonally visible ($ r $-visible) form each of them, the set of these internal points that are visible from the entire polygon is named \textit{kernel}. If a polygon has a kernel, we are able to cover it with only one guard. Therefore, the problem of the decomposition an orthogonal polygon to the minimum number of $ r $-star sub-polygons is equal to the problem of guarding an orthogonal polygon with the minimum number of $ r $-guards, that is claimed in the previous section. The \textit{bounding box} of a set of objects is the minimum area box (rectangle) within which all the objects lie. So, the bounding box of a polygon $ P $ is the axis-aligned minimum area rectangle within which all the points of $ P $ lie, for the orthogonal polygons, their bounding boxes are edge aligned, too. If an $ x $-monotone ($ y $-monotone) orthogonal polygon has a horizontal (vertical) edge in common with its bounding box(rectangle), the polygon is called \textit{histogram}, the common edge is named \textit{base}. If an orthoconvex polygon has an edge in common with its bounding rectangle, the polygon is named \textit{pyramid}. Every pyramid polygon is histogram, too. An orthoconvex polygon that has two adjacent edges in common with its bounding box is called \textit{fan} polygon. Fan polygons are also pyramid and histogram and at least one of their vertices belongs to their kernel. Base on the presented classification in paper~\cite{palios2014minimum}, clearly, histogram belong to class-1 of orthogonal polygons, while $ x $-monotone(or $ y $-monotone) belong to class-2. In the class-2 of orthogonal polygons, members have dent edges in two different directions, for monotone polygons these two directions are parallel. This subclass of class-2 is denoted as \textit{class-2(a)} and if the two directions are perpendicular, the subclass is denoted as \textit{class-2(b)}. In the following, we prove the adapted lemma~\ref{le:lemma1} that was originally presented in~\cite{hoorfar2017linear}.
\begin{figure}
	\centering
	\includegraphics[width=0.8\textwidth]{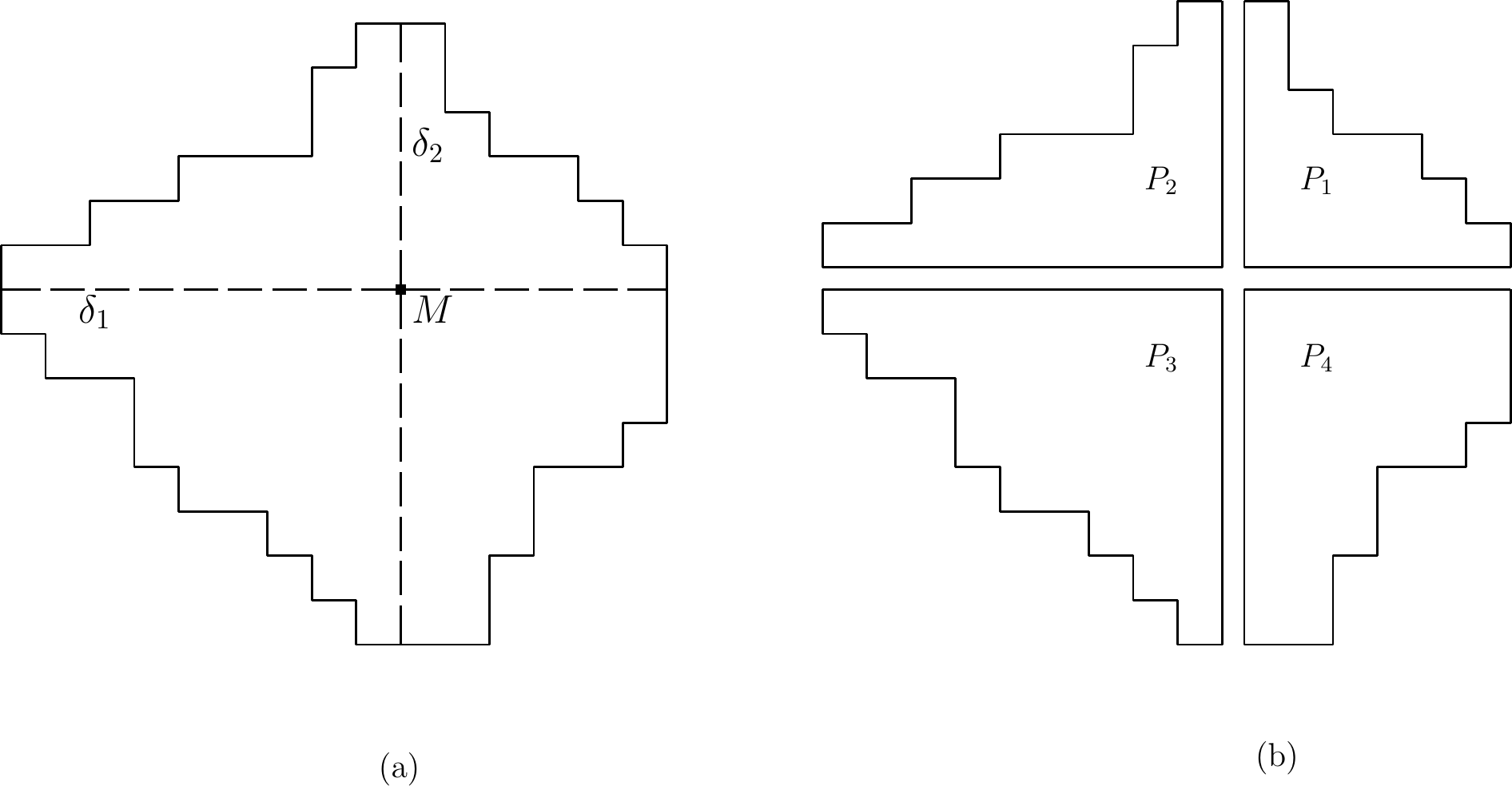}
	\caption{(a)A $ r $-star polygon $ P $ that has $ \delta_{1} $ and $ \delta_{2} $. Point $ M $ is the intersection of $ \delta_{1} $ and $ \delta_{2} $. (b)The decomposition of polygon $ P $ into four parts $ P_{1}$,$ P_{2}$,$ P_{3}$ and $ P_{4}$ which are fan polygons}
	\label{fi:fig2}
\end{figure}
\begin{lemma}
\label{le:lemma1}
An orthogonally convex (orthoconvex) polygon $ P $ is $ r $-star, if the leftmost and rightmost vertical edges of $ P $ are mutually weak visible and the upper and lower horizontal edges of $ P $ are mutually weak visible, too.
\end{lemma}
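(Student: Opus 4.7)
The plan is to convert the two weak-visibility hypotheses into the existence of two large axis-aligned rectangles inside $P$ whose intersection will furnish a point that $r$-sees all of $P$. Write $e_L,e_R,e_U,e_D$ for the leftmost, rightmost, topmost and bottommost edges of $P$, all unique by the $x$- and $y$-monotonicity that orthoconvexity guarantees.

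First I would pin down what mutual weak $r$-visibility of $e_L$ and $e_R$ actually gives. If a point $p=(x(e_L),y)\in e_L$ is $r$-visible from some $q=(x(e_R),y')\in e_R$, the rectangle with diagonal $pq$ must lie in $P$; since $P$ is $x$-monotone, every horizontal slice at heights in $[\min(y,y'),\max(y,y')]$ must span all of $[x(e_L),x(e_R)]$, forcing each such height to belong to the $y$-range of $e_L$ \emph{and} to that of $e_R$. A short case analysis then rules out strict containment in either direction and shows the two $y$-ranges coincide; write the common range as $[y_*,y^*]$. It follows that $R_{LR}:=[x(e_L),x(e_R)]\times[y_*,y^*]\subseteq P$. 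The symmetric reasoning applied to $e_U,e_D$ yields a common $x$-range $[x_*,x^*]$ and $R_{UD}:=[x_*,x^*]\times[y(e_D),y(e_U)]\subseteq P$.

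Because $e_L,e_R$ are the leftmost and rightmost edges we have $x_*\geq x(e_L)$ and $x^*\leq x(e_R)$, and analogous inequalities on the $y$-side, so the intersection $R_{LR}\cap R_{UD}=[x_*,x^*]\times[y_*,y^*]$ is nonempty. Pick any point $M$ in this intersection; I claim $M$ lies in the kernel of $P$, which gives the lemma. For the verification, fix $p\in P$; by symmetry assume $x(p)\leq x(M)$ and $y(p)\geq y(M)$, so the candidate rectangle is $[x(p),x(M)]\times[y(M),y(p)]$. Using orthoconvexity, the right extent $x_R(y)$ of the horizontal slice of $P$ equals $x(e_R)$ on $[y_*,y^*]$ and decreases monotonically to $x^*$ as $y$ rises to $y(e_U)$; hence $x_R(y'')\geq x^*\geq x(M)$ throughout $y''\in[y(M),y(p)]$. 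Similarly the left extent $x_L(y)$ equals $x(e_L)$ on $[y_*,y^*]$ and is monotonically increasing on $[y^*,y(e_U)]$, so $x_L(y'')\leq x_L(y(p))\leq x(p)$ on the same interval. Together these inequalities place the whole rectangle inside $P$, so $M$ $r$-sees $p$.

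The main obstacle I anticipate is extracting the monotonicity of $x_L(\cdot)$ and $x_R(\cdot)$ cleanly from orthoconvexity; once that is in hand the visibility bookkeeping is essentially mechanical. A more geometric alternative, matching Figure~\ref{fi:fig2}(b), would be to cut $P$ along the horizontal and vertical lines through $M$ into four sub-polygons $P_1,\ldots,P_4$, verify that each is orthoconvex and has two adjacent edges of its bounding box on its boundary meeting at $M$, conclude that each is a fan polygon whose kernel therefore contains $M$, and hence $M$ sees all four pieces. Either route hinges on the same key step: converting mutual weak visibility of opposite extreme edges into the inclusion of the full-width and full-height rectangles $R_{LR}$ and $R_{UD}$ inside $P$.
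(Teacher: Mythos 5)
Your proof is correct, and its skeleton coincides with the paper's: both arguments manufacture a point $M$ at the crossing of a horizontal certificate (from the weak visibility of $e_L$ and $e_R$) and a vertical certificate (from that of $e_U$ and $e_D$), and then show $M$ lies in the kernel. Where you diverge is in both the strength of the intermediate claim and the finishing step. The paper extracts only a single horizontal segment $\delta_1$ and a single vertical segment $\delta_2$ from the hypotheses, intersects them at $M$, and concludes by cutting $P$ along $\delta_1,\delta_2$ into four fan polygons whose common corner $M$ lies in each kernel --- exactly the ``geometric alternative'' you sketch in your last paragraph, matching Figure~\ref{fi:fig2}(b). Your primary route instead proves more than is needed: that the $y$-ranges of $e_L$ and $e_R$ coincide and the full rectangles $R_{LR}$ and $R_{UD}$ lie in $P$ (note this step leans on the paper's general-position assumption that no two vertical edges share an $x$-coordinate, and its analogue for horizontal edges; without it the right side of a spanning rectangle need not lie on $e_R$ itself, though only single segments would then be needed anyway), and then verifies $r$-visibility of an arbitrary $p$ from $M$ directly via unimodality of the slice extents $x_L(\cdot)$ and $x_R(\cdot)$. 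What your route buys is that it makes explicit the staircase facts the paper compresses into the unproved assertion ``all obtained parts are fan polygons with $M$ as their common core vertex,'' at the cost of the monotonicity bookkeeping you yourself flag (which does follow from the standard four-staircase structure of an orthoconvex boundary, so it is a manageable step, not a gap); the paper's decomposition is shorter but leaves that same verification implicit.
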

\begin{proof}
Because the leftmost and rightmost vertical edges of $ P $ are mutually weak visible, there exists a horizontal line segment $ \delta_{1} $  which is connecting the leftmost and rightmost vertical edges of $ P $ such that lies in $ P $. Similarly, Because the upper and lower horizontal edges of $ P $ are mutually weak visible, there exists a vertical line segment $ \delta_{2} $  which is connecting the upper and the lower horizontal edges of $ P $ such that lies in $ P $. If $ \delta_{1} $ connects the leftmost and rightmost vertical edges of $ P $ and $ \delta_{2} $  connects the upper and lower horizontal edges of $ P $ then they have an intersection $M$ that is contained in $ P $. $ \delta_{1} $ and $ \delta_{2} $ divide $P$ into 4 parts $ P_{1}$,$ P_{2}$,$ P_{3}$ and $ P_{4}$. All obtained pars are fan polygons with $M$ as their common core vertex. In every part, the entire $ M $ it is in the kernel, hence, if guard $ g $ is placed in the kernel, every point in $ P $ is visible to it.
\end{proof}
In the next section, we present a linear-time exact algorithm for finding the minimum guarding of orthogonal path polygons. Our algorithm uses the geometric approach that is presented in our previous researches~\cite{hoorfar2017minimum,hoorfar2017linear} to improved and obtain new results for orthogonal art gallery problem. Using this geometric approach instead of current graph theoretical leads to the algorithms with improved and better time complexity. In this approach, we  find the exact geometric positions of the point guards. Therefore, some of our definitions and notations is similar to our cited papers.

\section{An Algorithm for Guarding Path Galleries}
The path polygon has this property which can be divided into a number of sub-polygons, each of which can covered independently. The shortest watchman route of these sub-polygons is an orthogonal straight line segment. Also, for every sub-polygon we will prove that there is an optimum guard set which is all its guards are placed on the shortest watchman route of the sub-polygon. Hence, we will find that optimum guard set that is located on a set of line segments and it reduces the execution time of the algorithm. Besides that we will show that the visibility areas of all the guards that are located in a sub-polygon have not any effective intersections with the visibility areas of the guards that are located in another sub-polygons. So, the minimum number of guards that are required for guarding path polygon will be equal to the sum of the minimum numbers of guards that are required for the obtained sub-polygons. These sub-polygons are named \textit{balanced} orthogonal polygon that are monotone and \textit{straight-line walkable}. Straight-line walkable polygon means a polygon that its shortest watchman route(path) is a line segment i.e. a mobile guard can cover the entire polygon by walking back and forth on a straight route. For orthogonal polygon, this concept corresponds to the concept of balanced polygon. Actually, the described polygons have a area as named \textit{corridor} that straight shortest path is a part of it. For example, consider a histogram polygon, its base edge is a watchman route that is a part of its corridor. We will find this corridor using a ray-shooting (beam throwing) method in the next subsection. A path polygon is not necessarily straight walkable (or balanced), therefore, we will decompose a path polygon into the minimum number of balanced parts, then locating guards for every part, separately. At the first, the path polygon belongs to class-$ 4 $ of the described orthogonal classification, but after this decomposition all the obtained parts are belong to class-$ 2 $, because all the dent edges of path polygon that have horizontal direction are removed after the partition.
 
\subsection{The Decomposition of a Path Polygon into the Balanced Parts}
\begin{figure}
	\centering
	\includegraphics[width=\textwidth]{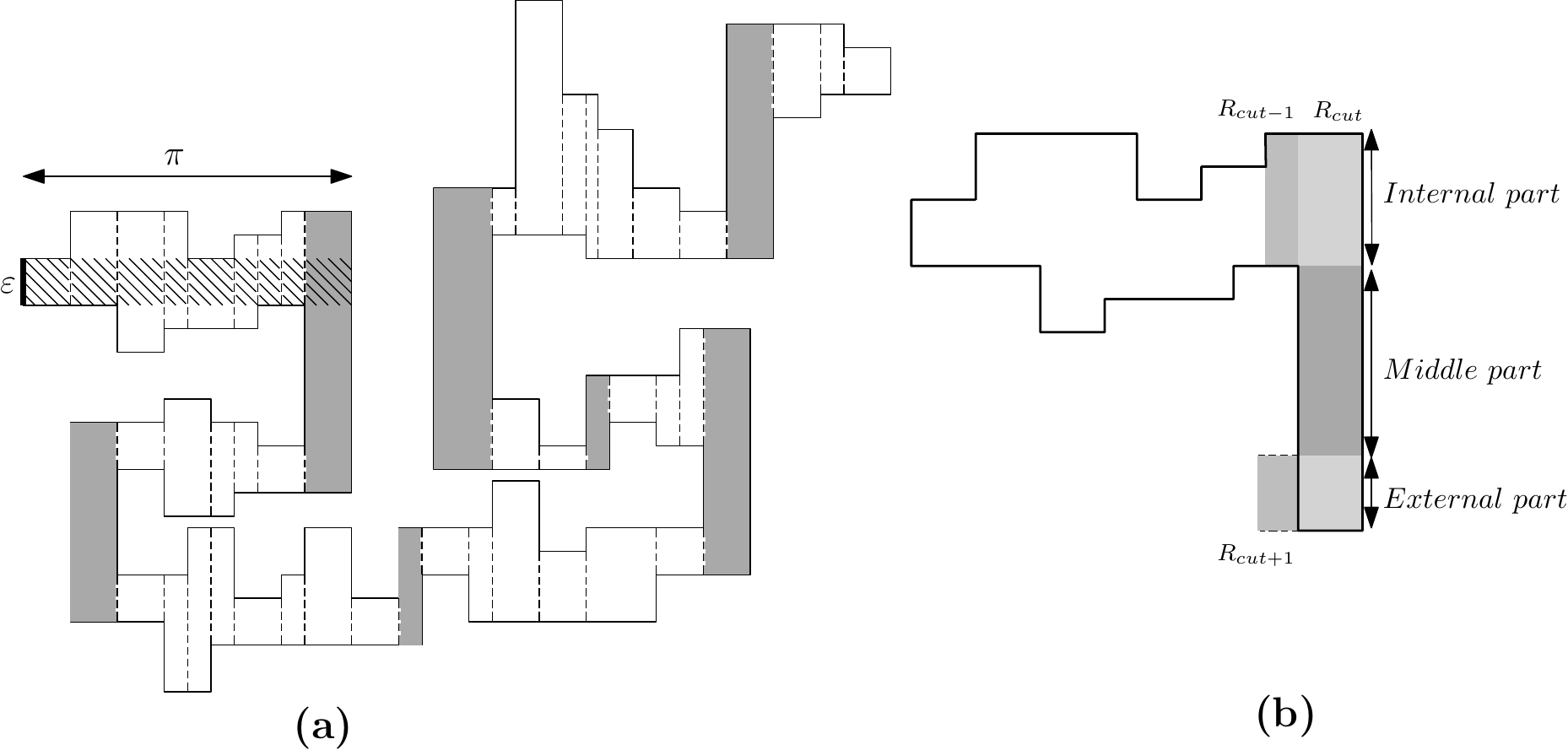}
	\caption{(a)Decomposition of a path polygon into balanced parts and vertical decomposition of them. The rectangles are shown in dark gray are cut. (b) An obtained balanced polygon $\pi$ and its corridor.}
	\label{fi:fig3}
\end{figure}
Suppose $ P $ be path polygon with $ n $ vertices that set $ R $ be its rectangle parts that are obtained after vertical decomposition and $ U $ and $ L $ be the sets of upper and lower edges of these obtained rectangles corresponding to the definitions that is explained in the previous section. Two of these rectangles are sources which are have only one adjacent parts while another have exactly $ 2 $, one of source rectangle is considered as start and another as last, also corresponding to the described order. The start rectangle and the general path polygon have one vertical edge in common, as denoted $\varepsilon $. Propagate a light beam in rectilinear path perpendicular to $ \varepsilon $ and also collinear with the $ X $-axis. Whole the light beam or a part of it passes through some members of the set $ R $(name this subset $ R_{\pi} $) and these rectangles together make a sub-polygon $ \pi $ of $ P $. See figure~\ref{fi:fig3}(a). The rectangles that belong to the polygon $ \pi $ have this geometric property which $ y $-ordinates of their upper edges are greater than $ y $-ordinates of their lower edges i.e. in polygon(sub-polygon) $ \pi $ all the dent edges of upper chain are higher than all the dent edges of lower chain. it is established that $ \min_{u_{i} \in \pi} (y(u_{i}))\geq \max_{l_{j} \in \pi}(y(l_{j})) $ for every $ u_{i}$ and $l_{i} $ belongs to rectangles of $ R_{\pi} $. Hence, there is a rectangular corridor $ \varsigma $ which is connecting the leftmost and rightmost vertical edge of $ \pi $ so that $ \varsigma $ has no intersection with $ ext(\pi) $ and contained in $ \pi $. If the leftmost vertical edge of sub-polygon $ \pi $ is denoted as $ v $, the rightmost vertical edge is denoted as $ v' $, also let $ y_1=\min_{u_{i} \in \pi} (y(u_{i})) $ and $ y_2=\max_{l_{j} \in \pi}(y(l_{j})) $, then $ \varsigma $ is a axis-aligned rectangle spanned by two points with the coordinates $ (x(v), y_1) $ and $ (x(v'), y_2)$. Therefore, $ \pi $ is walkable and balanced and every horizontal line segment that is connecting $ v $ and $ v' $ and located in $ \varsigma $ can be its shortest watchman route. After recognizing the first balanced sub-polygon(part) $ \pi $, we remove it from the path polygon $ P $ and iterate these operations to find next balanced parts until $ P $ is decomposed completely into balanced and monotone parts. \paragraph{Only one important point remains to be cleared up.} It is about the last rectangles of every obtained balanced sub-polygons(parts). suppose that $ P $ is decomposed into the balanced parts (polygons) $ \pi_{1},\pi_{2},\dots,\pi_{k} $ and let call the set of rectangles that is located in $ \pi_i $, $ R_{\pi_i} $, for every integer $ 1\leq i\leq k $. The last rectangle in $ R_{\pi_i} $ is called \textit{cut} rectangle because the intersection between $ \pi_i $ and $ \pi_{i+1} $ is the left edge of the cut rectangle(as denoted $ R_{cut} $). In fact, cut rectangle $ R_{\pi_i} $ is a border area and can either belong to the current part $ \pi_i $ or next part $ \pi_{i+1} $. If we want to find the path polygon $ P $ with the minimum number of guards, there \textbf{may} be a difference between two cases, that $ R_{cut} $ belongs to $ \pi_i $ {\small (case $ 1 $)} or $ R_{cut} $ belongs to $ \pi_{i+1} $ {\small (case $ 2 $)}. Let name previous rectangle of $ R_{cut} $, $ R_{cut-1} $, it is optimum that if $ R_{cut-1} $ is a local minimum, then we assign $ R_{cut} $ to $ \pi_{i+1} $(case 2). We prove this proposition in \textbf{lemma~\ref{le:lemma2}}. 
\paragraph{} Every cut rectangle is divided into three disjoint parts obtained by extending the horizontal edges of $ R_{cut-1} $ and $ R_{cut+1} $ incident to their common vertices with $ R_{cut} $ until intersect the boundary. The parts are called as \textit{internal}, \textit{middle} and \textit{external} parts, for an illustration see figure~\ref{fi:fig3}(b). The part that adjacent to $ R_{cut-1} $ is called \textit{internal part}, and the part that adjacent to $ R_{cut+1} $ is called \textit{external part} and third one is called \textit{middle part}. It is necessary to place a guard in cut rectangle for covering it, because it is impossible that the interior of middle part be guarded with an $ r $-guard that is not located in the cut rectangle $ R_{cut} $. If the previous rectangle $ R_{cut-1} $ be a local minimum, then we delete the cut rectangle from the set $ R_{\pi_i} $ and allocate it to the set  $ R_{\pi_{i+1}} $. Using this strategy reduces the number of required guards in some cases. For simplicity, we claim that:
\begin{claim}
\label{cl:claim1}
There exists a minimum cardinality guard set $ G={g_1, g_2,g_3\dots,g_{opt}} $ for a path polygon $ P $ so that all guards are located in the corridors.
\end{claim}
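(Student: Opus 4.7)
The plan is to take an arbitrary optimal guard set $G^\star$ and, guard by guard, project each point of $G^\star$ vertically into the corridor of the balanced sub-polygon containing it, showing that no coverage is lost. First I would invoke the earlier decomposition of $P$ into balanced sub-polygons $\pi_1,\dots,\pi_k$ and the accompanying fact (asserted in the paragraph preceding the claim) that the visibility regions of guards in different $\pi_i$'s do not interact in an ``effective'' way, so an optimum guard set for $P$ is the disjoint union of optimum guard sets for the individual $\pi_i$'s. This reduces the claim to: for a single balanced sub-polygon $\pi$ with corridor $\varsigma$, some optimum guard set lies entirely in $\varsigma$.

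Next I would define the projection. Write $y_1=\min_{u_j\in\pi} y(u_j)$ and $y_2=\max_{l_j\in\pi} y(l_j)$; by balancedness $y_2\le y_1$, and $\varsigma$ is the axis-aligned rectangle spanning $[x(v),x(v')]\times[y_2,y_1]$. For a guard $g=(x,y)\in\pi$, set $g'=(x,y_1)$ if $y>y_1$, $g'=(x,y_2)$ if $y<y_2$, and $g'=g$ otherwise. The key geometric lemma to establish is that the $r$-visibility region of $g$ in $\pi$ is contained in the $r$-visibility region of $g'$ in $\pi$. Applying this to every element of an optimum guard set and deleting duplicates produces a guard set of no larger size, entirely in $\varsigma$, that still covers $\pi$.

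The main work, and the principal obstacle, is the case analysis proving the visibility-containment lemma. Fix $g'=(x,y_1)$ (the case $g'=(x,y_2)$ is symmetric) and let $p=(x_p,y_p)\in\pi$ be $r$-visible from $g$, so the axis-aligned rectangle $R(g,p)\subseteq\pi$. I would split into the three cases $y_p\ge y$, $y_1\le y_p<y$, and $y_p<y_1$. In the third case $R(g',p)\subseteq R(g,p)$ trivially. In the first two cases $R(g',p)$ differs from $R(g,p)$ by a horizontal strip $S=[\min(x,x_p),\max(x,x_p)]\times[y_1,y]$ (respectively by the strip below $y_p$), and I need $S\subseteq\pi$. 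This is exactly where balancedness is decisive: for every $x'$ in the $x$-range of $S$, the vertical slice of $\pi$ through $x'$ lies in some rectangle $R_{j(x')}$ of the vertical decomposition, and by definition of $y_1,y_2$ this slice contains $[y_2,y_1]$; combining this with the fact that $R(g,p)\subseteq\pi$ gives $y(u_{j(x')})\ge y$, so the slice contains $[y_1,y]$ as well, and $S\subseteq\pi$ follows.

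Finally, I would tie up the boundary subtleties introduced by the cut rectangles discussed earlier: a guard lying on the shared vertical segment $s_i$ between $\pi_i$ and $\pi_{i+1}$ must be assigned to exactly one side, which is consistent with the case-based rule already stated (and formalised in the forthcoming Lemma~\ref{le:lemma2}). Under either assignment the projection above stays within $\varsigma$, so the construction produces a minimum-cardinality guard set whose points all lie in the union of the corridors, establishing the claim.
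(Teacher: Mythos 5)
Your within-sub-polygon projection lemma is correct: for a single balanced sub-polygon $\pi$ with corridor strip $[y_2,y_1]$, the slice argument does show that moving a guard vertically onto the strip only enlarges its $r$-visibility region \emph{inside $\pi$}, and balancedness is used exactly where you say. The genuine gap is the first step, the reduction to independent sub-polygons. The paper's assertion that guards in different parts ``do not interact effectively'' is made only for guards already located in the corridors (no interior point of $\varsigma_i$ is $r$-visible from $\varsigma_j$ for $i\neq j$); invoking it for an \emph{arbitrary} optimal guard set, whose guards may sit anywhere, is circular, since corridor membership is precisely what Claim~\ref{cl:claim1} is supposed to establish. And the per-guard projection really does lose coverage across cut boundaries. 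Concretely: let $\pi_i$ have corridor at heights $[5,6]$, let the cut rectangle span heights $[0,10]$, and let the first rectangle of $\pi_{i+1}$ span $[7,10]$ (this is exactly the configuration that triggers a cut, since $7>6$). A guard $g$ at height $8$ in the cut rectangle $r$-sees points of $\pi_{i+1}$ at height $8$; after projecting $g$ to height $6$, the spanning rectangle to such a point must contain a point at height $6$ in a slice equal to $[7,10]$, so visibility is lost. In an optimal solution those points will of course be covered by other guards in $\varsigma_{i+1}$, but your argument, which promises per-guard containment of visibility regions (``no coverage is lost''), does not show this; your closing remark that ``under either assignment the projection stays within $\varsigma$'' addresses where the projected guard lands, not whether the points it used to see on the other side of the cut remain covered.

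This is also where your route diverges from the paper's. The paper never transforms an arbitrary optimum: it proves the claim indirectly, by exhibiting an algorithm whose guards lie in the corridors by construction, and then matching a lower bound --- Lemma~\ref{le:lemma02} shows every tooth edge forces a guard into its orthogonal shadow, and after merging intersecting shadows these regions are pairwise disjoint, so Lemma~\ref{le:lemma3} yields that $GuardNumber$ guards are necessary; the cross-boundary interaction you gloss over is handled there by the cut-rectangle case analysis of Lemma~\ref{le:lemma2} (the observation that the middle part of $R_{cut}$ is invisible from outside $R_{cut}$, plus the local-minimum rule for assigning $R_{cut}$). To repair your transformation argument you would need an analogous counting or exchange step: after projecting, argue that each sub-polygon already contains at least as many guards of the optimal set as it has (merged) tooth shadows, so the points orphaned at the cuts are re-covered without increasing cardinality. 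As written, the proposal does not close that loop.
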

In these paper, we want to find the guard set $ G $ for path polygon $ P $ that is optimum and all its guards are located in corridors $ \varsigma_{1},\varsigma_{2},\dots,\varsigma_{k} $ of the obtained balanced sub-polygons $ \pi_{1},\pi_{2},\dots,\pi_{k} $ that $ k $ is the minimum number of sub-polygons($ 1\leq k \leq \lfloor\frac{n}{4}\rfloor $).
\begin{lemma}
\label{le:lemma2}
It is the optimum for guarding path polygon $ P $ that if $ R_{cut-1} $ is a local minimum, then we assign $ R_{cut} $ to $ \pi_{i+1} $ instead of assigning it to $ \pi_i $.
\end{lemma}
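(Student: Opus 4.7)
The plan is an exchange argument on optimal guard sets. I would begin by establishing a fact that both cases share: since the middle sub-rectangle of $R_{cut}$ can only be covered by a guard lying inside $R_{cut}$ itself (the extensions of the horizontal edges of $R_{cut-1}$ and $R_{cut+1}$ isolate it, and any guard outside $R_{cut}$ has an axis-aligned bounding rectangle with the middle part that leaves $P$), every optimum guard set must place at least one guard $g^*$ in $R_{cut}$. The question is therefore not whether $R_{cut}$ is guarded, but which of $\pi_i$ or $\pi_{i+1}$ we \emph{credit} that guard to, and what it must additionally cover.

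Next I would exploit the hypothesis that $R_{cut-1}$ is a local minimum, i.e.\ $h_{cut-1} < h_{cut-2}$ and $h_{cut-1} < h_{cut}$. Geometrically this means that both $u_{cut-1}$ sits weakly below $u_{cut-2}$ and $u_{cut}$, and $l_{cut-1}$ sits weakly above $l_{cut-2}$ and $l_{cut}$ (up to the usual assumption that consecutive rectangles share one of their horizontal edges). Consequently, the vertical strip of $R_{cut-1}$ behaves as a thin ``window'' in the corridor of $\pi_i$: a guard placed appropriately inside $R_{cut}$, at heights lying in the narrower slab $[y(l_{cut-1}), y(u_{cut-1})]$, $r$-sees through the entirety of $R_{cut-1}$ and even penetrates some distance into $R_{cut-2}$. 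Using Lemma~\ref{le:lemma1}, I would argue that in case 2 (where $R_{cut}\in\pi_{i+1}$) we can always realize the forced guard $g^*$ as the leftmost guard of $\pi_{i+1}$ with its $y$-coordinate chosen inside this common slab, so that it covers the interior part of $R_{cut}$, the middle part, the external part, and in addition the rectangle $R_{cut-1}$ entirely.

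With this geometric observation in hand, the exchange is straightforward. Take any optimum guard set $G_1$ for case 1, let $g^\star\in G_1$ be a guard lying in $R_{cut}$, and let $g'\in G_1$ be the guard of $\pi_i$ whose visibility region was responsible for $R_{cut-1}$. Because $R_{cut-1}$ is swallowed by a correctly positioned guard in $R_{cut}$, we can reposition $g^\star$ to the slab described above, at which point $R_{cut-1}$ is covered by $g^\star$ and all of $g'$'s other obligations (which lie further to the left, inside $\pi_i\setminus\{R_{cut-1},R_{cut}\}$) are still discharged by $g'$. Thus reassigning $R_{cut}$ to $\pi_{i+1}$ preserves full coverage, and the total number of guards never increases; moreover, in boundary situations where $g'$ was serving no other purpose than covering $R_{cut-1}$, it can be dropped entirely, strictly saving one guard.

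The main obstacle I anticipate is verifying that the slab $[y(l_{cut-1}), y(u_{cut-1})]$ is truly available to $g^\star$ as an $r$-visible witness of both sides of $R_{cut-1}$ simultaneously, given that $R_{cut+1}$ enforces the cut from the other direction (so $u_{cut+1}$ or $l_{cut+1}$ crosses the corridor line of $\pi_i$). One must check carefully that the repositioned $g^\star$ still $r$-sees the middle and external parts of $R_{cut}$ and the leftmost vertical slice of $\pi_{i+1}$; this requires combining the local-minimum hypothesis on $R_{cut-1}$ with the defining property of the cut (which constrains the sign of the jump between $R_{cut}$ and $R_{cut+1}$). Once that case analysis is completed, Lemma~\ref{le:lemma1} applied to the truncated balanced parts closes the argument.
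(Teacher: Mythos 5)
Your opening move matches the paper's: the middle part of $R_{cut}$ forces a guard $g^\star$ inside $R_{cut}$, so the lemma reduces to deciding which side that forced guard should serve. But from there your argument rests on a claim the paper never makes and that is false in general: that $g^\star$ can be positioned with $y$-coordinate in the slab $[y(l_{cut-1}),y(u_{cut-1})]$ so as to cover $R_{cut-1}$ entirely \emph{and} still act as the leftmost guard of $\pi_{i+1}$. A guard anywhere in $R_{cut}$ trivially covers all of $R_{cut}$ (it is a rectangle of the vertical decomposition), but its usefulness to $\pi_{i+1}$ requires its $y$-coordinate to lie in the corridor $\varsigma_{i+1}$, whose $y$-range is $[\max_{l_j\in\pi_{i+1}}y(l_j),\,\min_{u_j\in\pi_{i+1}}y(u_j)]$ and is constrained by $R_{cut+1}$ and the rectangles after it. Since the cut exists precisely because $R_{cut+1}$ breaks the balance condition of $\pi_i$, the corridor $\varsigma_{i+1}$ can lie entirely below $y(l_{cut-1})$ (or entirely above $y(u_{cut-1})$), making your slab disjoint from $\varsigma_{i+1}$; in that situation the repositioned $g^\star$ covers nothing of $\pi_{i+1}$ beyond $R_{cut}$, and your exchange no longer shows that the case-2 assignment is at least as good --- it can cost an extra guard inside $\pi_{i+1}$. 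You flag exactly this obstacle in your final paragraph but defer it (``once that case analysis is completed''), and that deferred check is the crux of the lemma, so the proposal as written has a genuine hole.

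The missing idea is the paper's redundancy observation, which removes any need for $g^\star$ to see $R_{cut-1}$ at all. Because $R_{cut-1}$ is a local minimum, $h_{cut-1}<h_{cut-2}$, and since consecutive rectangles of the vertical decomposition share one horizontal edge coordinate, the slab of $R_{cut-1}$ is contained in that of $R_{cut-2}$; hence the guard $g'$ that every guard set must contain to cover $R_{cut-2}$ (which $g^\star$ cannot finish from inside $R_{cut}$) automatically covers $R_{cut-1}$ as well. Therefore placing $g^\star$ in $R_{cut}\cap\varsigma_i$ buys nothing --- its only extra yield, $R_{cut-1}$, is covered for free by $g'$ --- whereas placing it in $R_{cut}\cap\varsigma_{i+1}$ lets it additionally cover rectangles of $\pi_{i+1}$ after the cut. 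Note also that the paper completes the lemma with the remaining configurations ($R_{cut+1}$ a local minimum, neither neighbor a local minimum, and the mirrored cases where both neighbors attach on the same side of $R_{cut}$), none of which your proposal addresses. If you replace your slab-repositioning step with the ``the $R_{cut-2}$ guard already handles $R_{cut-1}$'' argument and add the symmetric cases, your exchange essentially becomes the paper's proof.
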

\begin{proof}
\begin{figure}
	\centering
	\includegraphics[width=\textwidth]{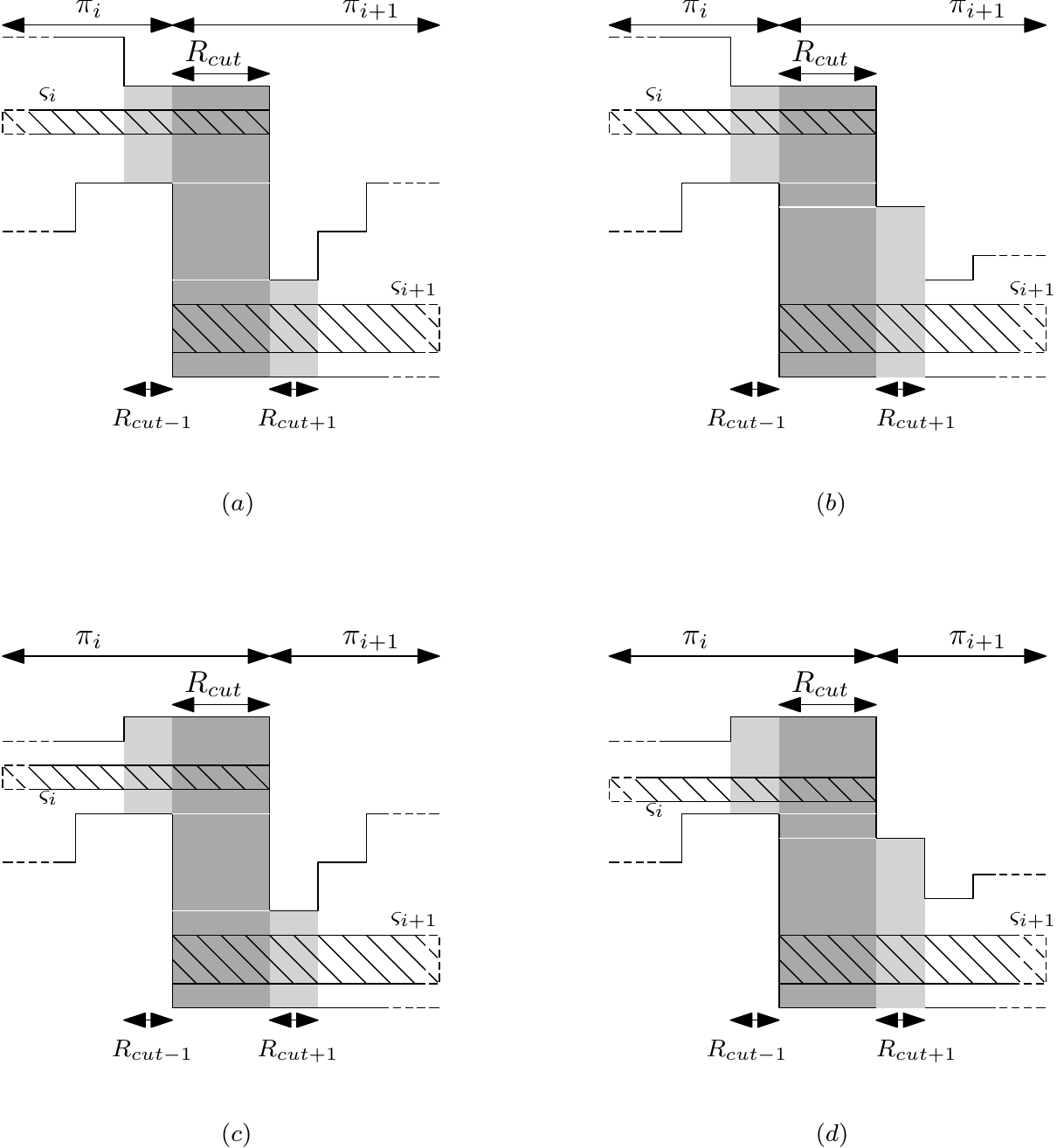}
	\caption{Four different cases occur for assigning the cut rectangle $ R_{cut} $ to $ \pi_i $ or $ \pi_{i+1} $.}
	\label{fi:fig4}
\end{figure}
Suppose that after the decomposition of the path polygon $ P $, for an integer $ i $, $ \pi_i $ and $ \pi_{i+1} $ be two adjacent sub-polygons and a cut rectangle $ R_{cut} $ is located between them, as shown in figures~\ref{fi:fig4}. Let $ \varsigma_i $ and $ \varsigma_{i+1} $ are corridors of $ \pi_i $ {\footnotesize (or $ \pi_i \cup R_{cut} $)} and $ \pi_{i+1} $ {\footnotesize (or $ \pi_{i+1} \cup R_{cut} $)}, respectively. The previous rectangle of $ R_{cut} $ is called $ R_{cut-1} $ and the next rectangle of $ R_{cut} $ is called $ R_{cut+1} $.  As already mentioned, because the interior of middle part of cut rectangle $ R_{cut} $ is not orthogonally visible from any points of $ P-R_{cut} $, it is necessary to place a guard $ g $ in $ R_{cut} $ for guarding it. Where is the best position for this guard $ g $? locating $ g $ in the intersection between $ R_{cut} $ and $ \varsigma_i $ (as denoted $ R_{cut}\cap \varsigma_i $) or in the intersection between $ R_{cut} $ and $ \varsigma_{i+1} $ (as denoted $ R_{cut}\cap \varsigma_{i+1} $) is better than anywhere else in $ R_{cut} $. If we locate $ g $ in $ R_{cut}\cap \varsigma_i $, $ g $ guard $ R_{cut} $ and some rectangles before it which belong to $ \pi_i $ and if we locate $ g $ in $ R_{cut}\cap \varsigma_{i+1} $, $ g $ guard $ R_{cut} $ and some rectangles after it which belong to $ \pi_{i+1} $. Which one lead to the minimum guarding of path polygon $ P $? There are four different cases which are shown in figure~\ref{fi:fig4}. In two cases (a) and (b), $ R_{cut-1} $ is a local minimum {\footnotesize (because $ h_{cut-1}<h_{cut} $ and $ h_{cut-1}<h_{cut-2} $)}, by placing $ g $ in the area $ h_{cut} \cap \varsigma_i $ the rectangle $ R_{cut-1} $ is guarded but the rectangle $ R_{cut-2} $ is not guarded (completely). So, certainly, there is a guard $ g' $ in the guard set that cover (guard) rectangle $R_{cut-2} $ and we know that the height of $ R_{cut-2} $ is higher than the height of $ R_{cut-1} $, hence $ g' $ can also guard $ R_{cut-1} $ completely. So, if the rectangle $ R_{cut-1} $ be local minimum, then locating $ g $ in the area $ h_{cut} \cap \varsigma_i $ is not useful. Therefore, it is better to locate $ g $ in the area $ h_{cut} \cap \varsigma_{i+1} $. It happens if we assign the cut rectangle $ R_{cut} $ to $ \pi_{i+1} $ instead of assigning it to $ \pi_i $ {\footnotesize (whether $ R_{cut+1} $ be local minimum or not)}. In the case (c), $ R_{cut+1} $ is a local minimum {\footnotesize (because $ h_{cut+1}<h_{cut} $ and $ h_{cut+1}<h_{cut+2} $)}, by placing $ g $ in the area $ h_{cut} \cap \varsigma_{i+1} $ the rectangles $ R_{cut} $ and $ R_{cut+1} $ are guarded but the rectangle $ R_{cut+2} $ is not guarded (completely). So, certainly, there is a guard $ g' $ in the guard set that cover (guard) rectangle $R_{cut+2} $ and we know that the height of $ R_{cut+2} $ is higher than the height of $ R_{cut+1} $, hence $ g' $ can also guard $ R_{cut+1} $ completely($ g' $ is located somewhere in $ \varsigma_{i+1} $). So, if the rectangle $ R_{cut+1} $ be local minimum, then locating $ g $ in the area $ h_{cut} \cap \varsigma_{i+1} $ is not useful. Therefore, it is better to locate $ g $ in the area $ h_{cut} \cap \varsigma_{i} $. It happens if we assign the cut rectangle $ R_{cut} $ to $ \pi_i $ instead of assigning it to $ \pi_{i+1} $, while $ R_{cut-1} $ is not local minimum. In the case (d), both of $ R_{cut-1} $ and $ R_{cut+1} $ are not local minimum. The rectangle $ R_{cut-1} $ is not local minimum and the height of $ R_{cut-1} $ is higher than the height of $ R_{cut-2} $, so, for guarding $ R_{cut-1} $ it is necessary to place a guard in area $ (R_{cut-1}\cup R_{cut})\cap \varsigma_i $. Also, the rectangle $ R_{cut+1} $ is not local minimum and the height of $ R_{cut+1} $ is higher than the height of $ R_{cut+2} $, so, for guarding $ R_{cut+1} $ it is necessary to place a guard in area $ (R_{cut}\cup R_{cut+1})\cap \varsigma_{i+1} $. Well, we do not need two guards in $ R_{cut} $ then, only for simplicity, we locate one guard in area $ R_{cut}\cap \varsigma_i $ and another guard in area $ R_{cut+1}\cap \varsigma_{i+1} $. It happens when we assign the cut rectangle $ R_{cut} $ to $ \pi_i $ instead of assigning it to $ \pi_{i+1} $. 
\begin{figure}
	\centering
	\includegraphics[width=0.8\textwidth]{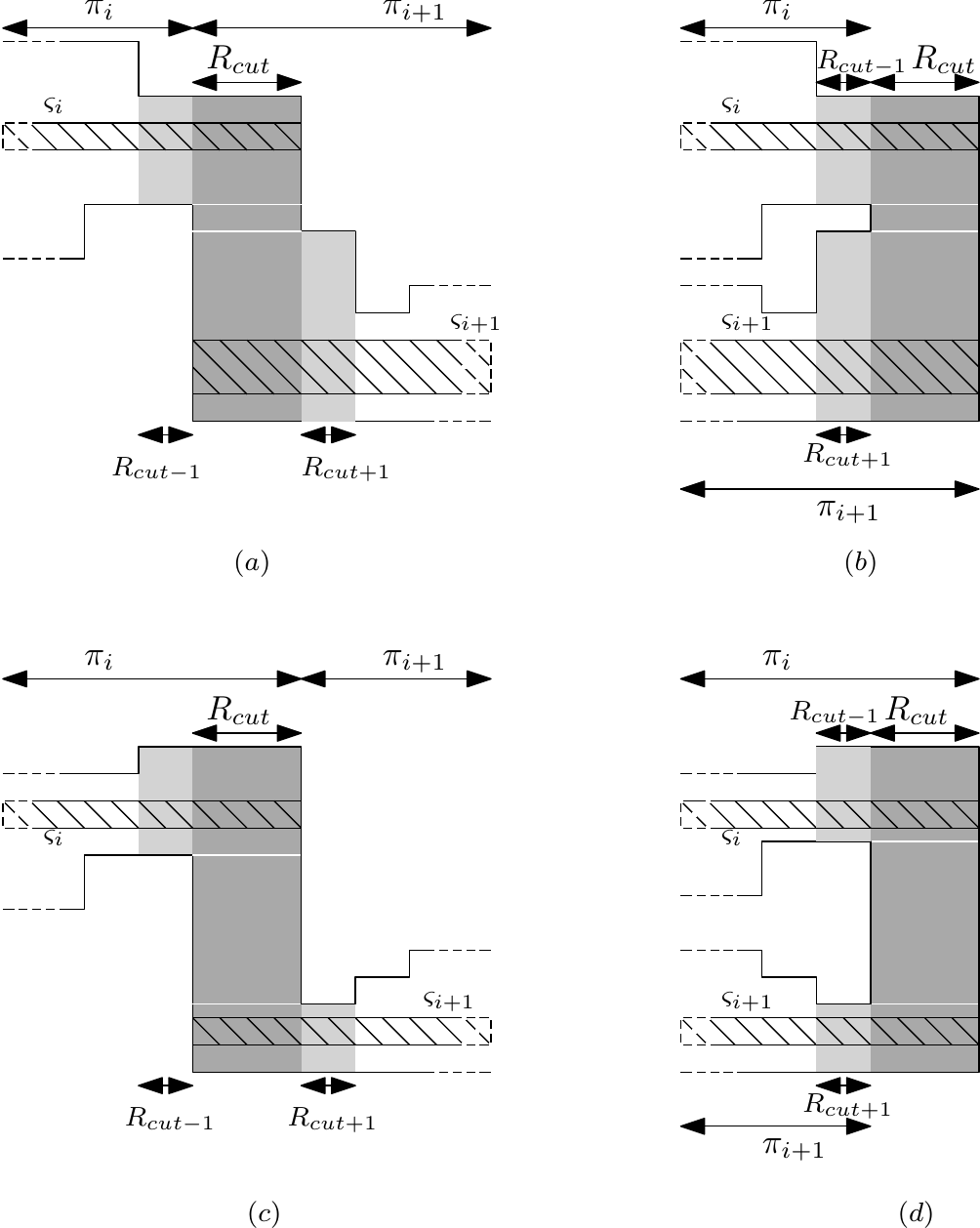}
	\caption{An illustration of the cases occur when two adjacent rectangles of $ R_{cut} $ are located on the same side.}
	\label{fi:fig5}
\end{figure}
In four described cases that are shown in figure~\ref{fi:fig4}, two adjacent rectangles $ R_{cut-1} $ and $ R_{cut+1} $ are located on different sides of $ R_{cut} $. Clearly, if the rectangles $ R_{cut-1} $ and $ R_{cut+1} $ are located in the same side, four another cases are occurred that they are similar to the previous four cases. For an illumination see figure~\ref{fi:fig5}, the case that is shown in (b) is equal to (a) and the case that is shown in (d) is equal to (b). Therefore, we do not focus on these new four cases.
\end{proof}
Remember the decomposition of path polygon $ P $ into the balanced sub-polygons and suppose that we find first balanced sub-polygon of $ P $, so, we remove it from $ P $ and iterate algorithm for $ P-\pi $ until $ P $ is decomposed into several balanced $ x $-monotone polygon. We remove the rectangles belong to $ \pi $ belong to $ R $. We know the members of $ R $ are ordered and labeled from $ 1 $, after removing, we relabel the remained members from $ 1 $, again, to simplify the description of the algorithm. Certainly, the same processes will be occurred for $ U $ and $ L $. The number of iterations is equal to the cardinality of $ R $ (in the beginning). Therefore, the time complexity of the decomposition path polygon $ P $ into balanced sub-polygons is processable in the linear-time corresponding to the size of $ P $. Now, we describe the linear-time algorithm for decomposition $ P $ into the balanced sub-polygons.
\begin{algorithm}[]
	\KwData{an path polygon with $ n $ vertices}
	\KwResult{minimum number of balaced monotone polygons}
	set $ min_u=u_1$ and $ max_l=l_1 $\;
	\While{set of rectangles $ R \neq \emptyset $}{
		\eIf{ $ u_i>max_l $ or $ l_i<min_u $}{
			\eIf{$ i-2 = 1 $ or $ R_{i-2} $ is not local minimum}{
			   $ R = R - \{R_1,R_2,\dots,R_{i-2},R_{i-1} \}$ \;
			   $ U = U - \{u_1,u_2,\dots,u_{i-2},u_{i-1} \}$ \;
			   $ L = L - \{l_1,l_2,\dots,l_{i-2},l_{i-1} \}$ \;}{
			   $ R = R - \{R_1,R_2,\dots,R_{i-2}\} $\;
			   $ U = U - \{u_1,u_2,\dots,u_{i-2}\} $\;
			   $ L = L - \{l_1,l_2,\dots,l_{i-2}\} $\;		   
			   }
			refresh the index of  $ R $, $ U $ and $ L $ starting with $ 1 $\;
			reset $ min_u=u_1$ and $ max_l=l_1 $\;
		}{
		
			set $min_u=\min(min_u,u_{i})$ and $max_l=\max(max_l,l_{i})$\;
		}
	}
\caption{The algorithm for decomposition path polygon $ P $ into the balanced sub-polygons.}
\label{al:algo1}
\end{algorithm}  
Every balanced and monotone polygon $ \pi $ has an axis-aligned rectangular area $ \varsigma $ (named corridor) which is connecting the leftmost and rightmost edges of $ \pi $. This area is also connecting the lowest dent edge of upper chain and the highest dent edge of lower chain. Suppose that $ P $ is decomposed into a set of the balanced sub-polygons $ \pi_{1},\pi_{2},\dots,\pi_{k} $ and $ \varsigma_{1},\varsigma_{2},\dots,\varsigma_{k} $ be their corridors, respectively. So, if $i \neq j$, There is no point in the interior of $ \varsigma_i $ such that orthogonally visible from $ \varsigma_j $. Due to this fact, if we optimally cover $ P $ so that all the guards are located on the corridors, guarding each of sub-polygons can be done independently i.e. the minimum number of guards for guarding the entire polygon is the sum of the minimum number of guards that are necessary for every sub-polygons. One addition point is about proving the explained claim. To prove claim~\ref{cl:claim1}, we present an algorithm in the next sections and prove that its results is optimum.

\subsection{The Algorithm for Guarding the Balanced Sub-polygons}
\label{ss:ss01}
In the previous subsection, we explained that every balanced (walkable) polygon has a rectangular area, named corridor which is the entire polygon is weak visible from it. Now, we describe an algorithm to find the minimum number of guards and their positions for an orthogonal and monotone balanced polygon, such that all guards is only located in the corridor. The presented algorithm in this section is the improved version of the algorithm that is already presented in our paper~\cite{hoorfar2017minimum}. Assume that $ P $ is a balanced orthogonal monotone with $ n $ vertices, after vertical decomposition, the sets $ R $, $ U $, $ L $, $ E_L $ and $ E_U $ are obtained for the polygon $ P $ according to their definitions. Let $ \varepsilon $ and $ \varepsilon' $ be the leftmost and rightmost vertical edges of $ P $ and $ e_{min} $ and $ e_{max} $ be the lowest horizontal edge of the upper chain of $ P $ and the highest horizontal edge of the lower chain of polygon $ P $, respectively.
\begin{defini}
An axis-aligned rectangular area that is contained in $ P $ and spanned by points $ (x(\varepsilon),y(e_{min}))  $ and $ (x(\varepsilon'),y(e_{max})) $ is named corridor of $ P $. The corridor of a balanced monotone polygon $ P $ is not empty and denoted as $ \varsigma_p $.
\end{defini} 
\begin{defini}
For a horizontal edge $ e $ of the polygon $ P $, the set of every point $p\in P$ which there is a point $ q \in e $ such that $ pq $ is a line segment normal to $ e $ and completely inside $ P $, is named \textit{orthogonal shadow} of $ e $, as denoted $ os_e $(for abbreviation).
\end{defini}
For the balanced monotone orthogonal $ P $ with $ n $ vertices, we present algorithm~\ref{al:algo2} to find the minimum number of guards and their positions. In the following, we explain the details of the algorithm and illustrate it. First, we find all tooth edges of the set $ E = E_L \cup E_U $ and call the obtained set as $ D $. For every $  d_i \in D $, we compute orthogonal shadow of $ d_i $ as $ os_{i} $. Let $ D=\{d_1, d_2,\dots,d_k\} $ and $ OS=\{os_1,os_2,\dots,os_k\} $, ordered from left to right by $ x $-coordination of their left vertical edges.
\begin{lemma}
\label{le:lemma02}
Every tooth edge $ t $ can be covered just with a guard which is placed in orthogonal shadow $ os_{t} $, not anywhere else. 
\end{lemma}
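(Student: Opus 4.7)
The plan is to prove the lemma by contrapositive: assume a guard $g$ $r$-sees every point of the tooth edge $t$, and then show directly from the definition of $r$-visibility that $g$ must lie in $os_t$. Without loss of generality I will take $t$ to be a tooth on the upper chain (so $t$ is a horizontal edge with both endpoints of interior angle $\pi/2$, and the polygon sits locally below $t$ near both endpoints); the argument for a lower tooth is symmetric by reflection.

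The proof breaks into two geometric steps. First I would show that the $x$-coordinate of $g$ must satisfy $x(\mathit{left}(t)) \le x(g) \le x(\mathit{right}(t))$. The argument is by contradiction: if, say, $x(g) < x(\mathit{left}(t))$, then the axis-aligned bounding rectangle of $g$ and $\mathit{left}(t)$ contains the point $(x(g), y(t))$. Because $\mathit{left}(t)$ is a convex vertex with interior angle $\pi/2$ and the polygon is below $t$ in a right-neighborhood of $\mathit{left}(t)$, any point at height $y(t)$ with $x$-coordinate strictly smaller than $x(\mathit{left}(t))$ lies in $\mathrm{ext}(P)$. This forces part of the bounding rectangle outside $P$, contradicting the assumption that $g$ $r$-sees $\mathit{left}(t)$. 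The same reasoning with $\mathit{right}(t)$ rules out $x(g) > x(\mathit{right}(t))$.

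Once $x(g)$ lies in the $x$-range of $t$, the point $q := (x(g), y(t))$ belongs to the edge $t$ itself. Since $g$ covers $t$, in particular it $r$-sees $q$, so the bounding rectangle of $g$ and $q$ is contained in $P$. But this bounding rectangle is exactly the vertical segment $gq$ (it degenerates because $g$ and $q$ share an $x$-coordinate). Thus the vertical segment from $g$ to the point $q \in t$ is a segment normal to $t$ lying entirely in $P$, which is precisely the condition in the definition of $os_t$. Hence $g \in os_t$, completing the proof.

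The argument is almost entirely unpacking definitions, so there is no serious obstacle; the only delicate point is justifying step one cleanly, namely that the strict convexity of the endpoints of a tooth edge really does force $\mathrm{ext}(P)$ to appear immediately outside the $x$-range of $t$ at height $y(t)$. I would spell this out by invoking the definition of a tooth edge (both endpoints have interior angle $\pi/2$), which fixes the local orientation of the two incident vertical edges and hence the side on which $\mathrm{ext}(P)$ lies just outside the endpoints of $t$.
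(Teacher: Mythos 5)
Your proof is correct in substance and lands close to the paper's argument, but the two routes differ in detail and yours is in one respect more careful. The paper argues via the \emph{far} endpoint: assuming the guard $\gamma$ covers $t$ but lies outside $os_t$ (to the left, say), it takes the bounding rectangle of $\gamma$ and the right endpoint $R_t$, observes that its horizontal side at height $y(t)$ strictly contains $t$, and concludes this is impossible because an edge of the polygon cannot be a proper subset of a segment contained in $P$ (this is where toothness enters, through the convex endpoints). You instead use the \emph{near} endpoint together with exterior points just beyond it, and then add a second step the paper skips entirely: once $x(g)$ lies in the $x$-range of $t$, the bounding rectangle of $g$ and $q=(x(g),y(t))$ degenerates to the vertical segment $gq\subseteq P$, which is exactly the membership condition for $os_t$. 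That step is a genuine gain in rigor, because the paper silently identifies ``not in $os_t$'' with ``$x$-coordinate outside the $x$-range of $t$,'' an equivalence that holds only because $P$ is $x$-monotone; your degenerate-rectangle argument establishes it directly without appealing to monotonicity.

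However, one justification in your step one is false as literally stated: it is \emph{not} true that every point at height $y(t)$ with $x$-coordinate strictly smaller than $x(left(t))$ lies in $ext(P)$. In a monotone polygon the upper chain can rise above $y(t)$ again further to the left (e.g.\ a taller tooth elsewhere), making such points interior; in particular your chosen witness corner $(x(g),y(t))$ of the bounding rectangle need not be exterior. The step survives because you do not need the global claim: the top (or bottom) side of the bounding rectangle of $g$ and $left(t)$ is the horizontal segment at height $y(t)$ spanning from $x(g)$ to $x(left(t))$, and the points $(x(left(t))-\epsilon,\,y(t))$ on that side are exterior for all sufficiently small $\epsilon>0$, precisely because $left(t)$ has interior angle $\frac{\pi}{2}$ and the incident vertical edge descends. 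This local statement is exactly the ``delicate point'' you flag in your closing paragraph; substitute it for the global claim and your proof is complete and sound.
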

\begin{proof}
Suppose that $ P $ is a monotone orthogonal polygon and assume that the tooth edge $ t $ is guarded with $ \gamma $ that is not placed in the shadow $ os_{t} $, so, $ \gamma $ is not in the $ x $-coordinate of any points on the edge $ t $. Assume that the left and right endpoints of $ t $ are denoted as $ L_t $ and $ R_t $, and let $ x $-coordinate of $ L_t $ be greater than $ x $-coordinate of $ \gamma $ i.e. $ x(L_t )> x(\gamma)$. Clearly, $ \gamma $ is visible to $ t $, so, two endpoints $ R_t $ and $ L_t $ are visible to $ \gamma $. Hence, there exists an axis-aligned rectangle spanned by the $ R_t $ and $ \gamma $ is contained in $ P $. These two points is not in the same $ x $-coordinate and even in the same $ y $-coordinate, we know that every vertices of the rectangle belong to $ P $ as denoted $ R_t=(x(R_t),y(R_t)) $, $ A=(x(R_t),y(\gamma)) $, $ B=(x(\gamma),y(R_t)) $ and $ g_d=(x(\gamma),y(\gamma))$. So, the horizontal edge $ BR_t $ is contained in $ P $, Completely. It is impossible, because $ t \subset BR_t $ i.e. if an edge of polygon be only a part of a segment which belong to the polygon, So, it is not really an edge.
\end{proof}
\begin{figure}
	\centering
	\includegraphics[width=\textwidth]{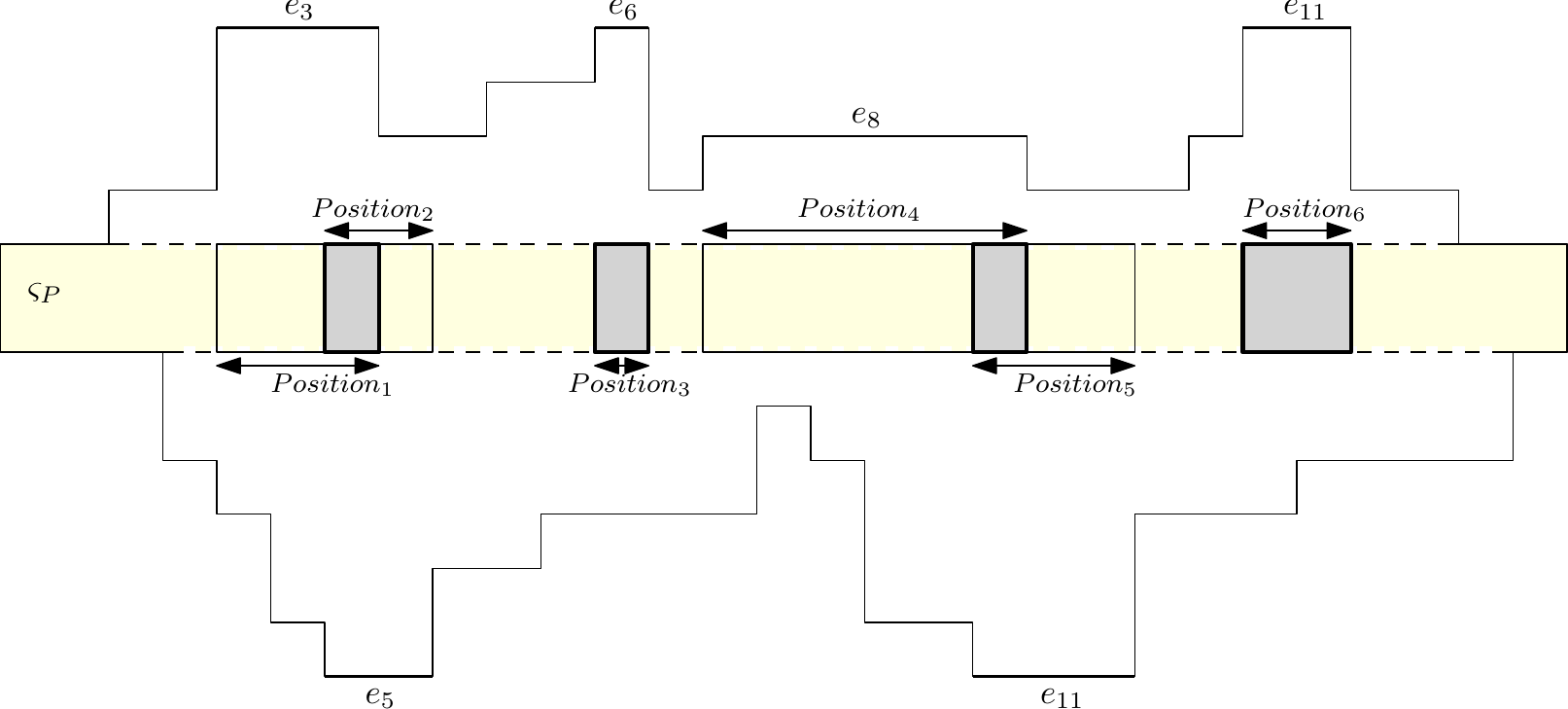}
	\caption{An Illumination of the definitions, the tooth edges are shown in bold segments and all bold bordered rectangles of $ \varsigma $ belong to $ Position $ that are the positions for guards.}
	\label{fi:fig6}
\end{figure}
Hence, according to the lemma~\ref{le:lemma02}, at least, one guard must be placed in every orthogonal shadow of tooth edges. We show in the algorithm that this number of guards is sufficient for guarding the entire polygon $ P $ and no extra guard is needed. See figure~\ref{fi:fig6}, some orthogonal shadows of tooth edges of the upper chain may have intersection with some orthogonal shadows of tooth edged of the lower chain. If it happen, we place a guard in the intersection between them to reduce the number of guards i.e. if two different tooth edges $ t $ and $ t' $ belong to $ E_U $ and $ E_L $, respectively, and the intersection between their orthogonal shadows is not empty (as denoted $ os_{e_1}\cap os_{e_2}\neq \emptyset $) for guarding both of them one guard on the intersection is sufficient. So, in the $ OS $, we replace two members $ os_{t}$ and $ os_{t'}$ with the intersection of them($ os_{e_1}\cap os_{e_2}$). We know that the intersection of every $ 3 $ members of $ OS $ is empty and after these replacement the cardinality of set $ OS $ is equal to $ \kappa \leq k $. Remember that the strategy of our guarding is placing guards in the corridor of balanced monotone orthogonal polygon $ P $ and we know the orthogonal shadow of every tooth edge of $ P $ has intersection with $ \varsigma $. Assume that the rectangular area $Position_i=os_i \cap \varsigma $ {\footnotesize (for every $ i $ between $ 1 $ and $ \kappa $)} and $Position=\{Position_1,Position_2,\dots,Position_\kappa\} $ s.t. $ (\kappa \leq k) $, ordered corresponding to their rectangle order. Now, we know that the intersection of every $ 2 $ elements of set $ Position $ is empty. The set $ Position $ is the positions for placing guards, one guard must be located in every member of $ Position $, see figure~\ref{fi:fig4} again. Using this strategy leads to find the positions for locating the minimum number of guards in the balanced orthogonal sub-polygon $ P $ in the linear time corresponding to number of its vertices $ n $. In algorithm~\ref{al:algo2}, the set $ Position $ is the positions for the optimum guard set and the variable $ GuardNumber $ is the cardinality of the optimum guard set.
\begin{algorithm}[]
	\KwData{the horizontal edges of two chains of balanced monotone orthogonal polygon $ P $ with $ n $ vertices ($ E_L $,$ E_U $)}
	\KwResult{the optimum number of point guards ($ GuardNumber $) and their positions ($ Position $)}
	Set $ GuardNumber=0 $ and $ Position=\emptyset $\;
	Set $ e_{min}= the~lowest~horizontal~edge~of~E_U $\;
	Set $ e_{max}= the~highest~horizontal~edge~of~E_L $\;
	\ForEach{edge $ e_{i} $ belongs to $ E_L $}{
		\If{Interior angles of $right(e_i)$ and $ left(e_i)$ are equal to $\frac{\pi}{2}$}{
			$ A_i=(x(left(e_i)),y(e_{max}) $\;
			$ B_i=(x(right(e_i)),y(e_{min}) $\;
			Set $ Position_i= $ rectangle spanned by  $ A_i $ and $B_i $\;
			Set $ Position_L=Position_L \cup \{Position_i $\}\;
			$ GuardNumber++ $;
		}
	}
   \ForEach{edge $ e_{i} $ belongs to $ E_U $}{
		\If{Interior angles of $right(e_i)$ and $ left(e_i)$ are equal to $\frac{\pi}{2}$}{
			$ A_i=(x(left(e_i)),y(e_{max}) $\;
			$ B_i=(x(right(e_i)),y(e_{min}) $\;
			Set $ Position_i= $ rectangle spanned by  $ A_i $ and $B_i $\;
			Set $ Position_U=Position_U \cup \{Position_i $\}\;
			$ GuardNumber++ $;
		}
   }
   Merge the sorted lists $ Position_L$ and $ Position_U $ as sorted list $ Position $.\;
   \ForEach{horizontal segment $ position_{i} $ belongs to $ Position $}{
   		\If{$ Position_i\cap Position_{i+1}\neq \emptyset $}{
   			$ Position_i=Position_i\cap Position_{i+1} $\;
   			$ Position=Position-\{Position_{i+1}\}$\;
   			$ GuardNumber-- $;
   		}
      }
\caption{Optimum guarding of a balanced monotone orthogonal polygon $ P $ with $ n $ vertices.}
\label{al:algo2}
\end{algorithm}
\\The positions of all guards are in the set $ SI $ and every elements of $ SI $ is a subset of corridor $ \varsigma $, so, all guards are located on corridor $ \varsigma $. It is clear that the time complexity of the algorithm is as same as the cardinality of set $ E $ and it is linear-time according to the size of $ E $.
\begin{lemma}
\label{le:lemma3}
The minimum number of guards for cover a balanced monotone orthogonal polygon $ P $ is equal to $ GuardNumber $ that is obtained by algorithm~\ref{al:algo2} .
\end{lemma}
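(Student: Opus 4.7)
The proof splits naturally into a lower bound $OPT(P) \ge GuardNumber$ and a matching upper bound showing that the guards placed at the elements of $Position$ cover $P$.

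For the lower bound, I would invoke Lemma~\ref{le:lemma02}: any r-guard covering a tooth edge $t$ must lie in the orthogonal shadow $os_t$. Two tooth edges on the same chain are always separated by at least one dent edge, so their x-ranges and hence their orthogonal shadows are disjoint. Across the two chains, at most one upper tooth and one lower tooth can have shadows meeting at a given x-coordinate, and the algorithm merges each such overlapping pair into one $Position$ element equal to the intersection of the two shadows. Thus the elements of $Position$ partition the tooth edges into groups of one or two, have pairwise disjoint x-projections, and cannot be covered simultaneously by any single r-guard. Hence at least $|Position| = GuardNumber$ guards are necessary, giving $GuardNumber \le OPT(P)$.

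For the upper bound, I would show that the placed guards r-see every point of $P$. The crucial geometric fact is that a guard $g$ located in the corridor $\varsigma$ r-sees a point $p$ inside a pixel $R_j$ precisely when every intermediate pixel $R_k$ between $g$ and $R_j$ satisfies $y(u_k) \ge y(u_j)$ and $y(l_k) \le y(l_j)$; the balanced property guarantees the guard's y-coordinate already lies inside every pixel, so only the y-range of $R_j$ itself can be obstructed. Given any pixel $R_j$, I would identify a tooth whose $Position$ element supplies a covering guard by walking along the chains from $R_j$ in the direction in which both the upper chain rises monotonically and the lower chain descends monotonically; in a balanced monotone polygon such a walk must terminate at a tooth edge (local maximum of the upper chain or local minimum of the lower chain), and every intermediate pixel traversed satisfies the intermediate-pixel condition by construction, so a guard placed at the corresponding $Position$ element r-sees all of $R_j$.

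The main obstacle will be making this coverage argument rigorous: verifying that the walk from any pixel $R_j$ always terminates at a tooth whose $Position$ guard still r-sees $R_j$ after the merging step, and handling the case where the walk must be taken in opposite directions along the two chains. This requires a case analysis on whether $e(u_j)$ and $e(l_j)$ are themselves tooth, dent, or intermediate rising/falling edges. Once the upper bound is settled, combining it with the lower bound yields $OPT(P) = GuardNumber$, completing the proof.
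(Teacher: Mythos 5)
Your lower bound is sound and coincides with the paper's: Lemma~\ref{le:lemma02} forces a guard into each tooth's shadow, same-chain shadows are disjoint by $x$-monotonicity, and after merging no single guard can serve two distinct elements of $Position$. The genuine gap is in your upper bound, specifically the step asserting that the walk from a pixel $R_j$ terminates at a tooth whose $Position$ guard ``r-sees all of $R_j$.'' That per-pixel claim is false in general: take $R_j$ whose nearest usable upper tooth lies far to its left and nearest lower tooth far to its right, with a lower dent at height $y_2=\max_k y(l_k)$ between $R_j$ and the upper tooth, and an upper dent at height $y_1=\min_k y(u_k)$ between $R_j$ and the lower tooth (consistent with balancedness, which only requires $y_2\le y_1$). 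The left guard cannot see the part of $R_j$ below $y_2$ (the intermediate floor blocks the spanned rectangle) and the right guard cannot see the part above $y_1$; no single guard covers $R_j$, and the case analysis you propose on the types of $e(u_j)$ and $e(l_j)$ cannot rescue a per-pixel statement. Your ``precisely when'' visibility criterion embeds the same pixel/point conflation: for a single point $p$ the correct intermediate condition is $y(u_k)\ge y(p)$ and $y(l_k)\le y(p)$, not a condition on $y(u_j)$ and $y(l_j)$.

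The repair is to argue per point and let the corridor do the splitting, which your own setup already supplies. Since every guard $g$ lies in the corridor, balancedness gives $y(l_k)\le y_2\le y(g)\le y_1\le y(u_k)$ for \emph{every} pixel $R_k$, so for a point $p$ above the corridor the rectangle spanned by $g$ and $p$ can never be blocked from below; the only obstructions are ceilings, and the monotone walk along the upper chain alone (rising from $e(u_j)$ until a local maximum, which is a tooth, with all intermediate ceilings at height at least $y(u_j)\ge y(p)$) yields a guard seeing $p$, even after its shadow has been intersected with a lower tooth's shadow in the merge step. Symmetrically, points below the corridor need only the lower-chain walk, and corridor points are seen by any guard. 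Thus the two chains are never walked simultaneously for one point, and the ``opposite directions'' case you flag as the main obstacle dissolves. For comparison, the paper's own proof is terser still: it takes the identical lower bound from Lemma~\ref{le:lemma02} and, for sufficiency, merely asserts that each $Position_i$ meets the kernel of a piece in some $r$-star decomposition of $P$, without constructing that decomposition; your point-wise argument, once corrected as above, would actually substantiate what the paper leaves as an assertion.
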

\begin{proof}
Suppose that $ GuardNumber $ guards is sufficient to guard the entire polygon $ P $, using lemma~\ref{le:lemma02} prove that this number of guards necessary even for guarding the tooth edges of $ P $. Every area $ Position_i\in Position $ is a subset of a $ r $-star sub-polygon i.e. if we decompose $ P $ into $ r $-star parts(sub-polygons) then the kernels of every $ r $-star sub-polygons has at least one point in the elements of $ Position $, so the entire $ P $ is covered by these $ GuardNumber $ guards and their positions.
\end{proof}

\subsection{Time Complexity of Algorithm}
Now, we discuss about efficiency and time complexity of the whole solution and we explain that why our algorithm is processable in $ O(n) $- time while n be the size of the input(path polygon $ P $). Given path polygon $ P $, for guarding $ P $, we need to decompose the polygon into balanced parts with algorithm~\ref{al:algo1}. The vertical decomposition and finding optimum balanced orthogonal parts are solvable in the linear-time ($ O(n)-time $) because the number of rectangles is order of $ O(n)$. After that the problem is divided into subproblems which are finding minimum guard set for the obtained balanced sub-polygons(parts). We use algorithm~\ref{al:algo2} for guarding monotone parts, hence, subproblems is solvable in the linear-time corresponding to its size. The total time of solving sub-problems is $ O(n) $-time i.e. the total number of vertices of the all obtained balanced parts is $ O(n) $, so, algorithm~\ref{al:algo2} is run in $ O(n) $-time for all balanced parts. Therefore, all computations handle in $ O(n) $-time. Finally, $ GuardNumber $ is referred to the optimum number of guards needed to cover path polygon $ P $. Therefore, we have proved the general result of the paper:
\begin{theorem}
	\label{th:th01}
	There is a geometric algorithm that can find the minimum number of guards for given orthogonal path polygon $ P $ with $ n $ vertices, with r-guards in $ O(n) $-time.
\end{theorem}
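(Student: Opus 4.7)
The plan is to prove the theorem by combining the two algorithmic pieces developed in the preceding subsections into a single two-phase procedure and then verifying optimality and running time separately. In the first phase, I would run Algorithm~\ref{al:algo1} on the input path polygon $P$: compute the vertical decomposition into the ordered rectangle set $R = \{R_1, \dots, R_m\}$, scan left to right while maintaining $\min_u$ and $\max_l$, and cut off a new balanced sub-polygon $\pi_i$ whenever the invariant $\min_{u_j \in \pi_i} y(u_j) \geq \max_{l_j \in \pi_i} y(l_j)$ is about to fail. This yields a decomposition $P = \pi_1 \cup \pi_2 \cup \cdots \cup \pi_k$ into balanced $x$-monotone orthogonal polygons, with the cut rectangles handled according to the local-minimum rule justified by Lemma~\ref{le:lemma2}. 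In the second phase, I would run Algorithm~\ref{al:algo2} independently on each $\pi_i$ to produce a guard set $G_i \subseteq \varsigma_i$ of size $\textit{GuardNumber}_i$, and output $G = \bigcup_i G_i$.

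For correctness I would argue optimality in two layers. First, within each balanced sub-polygon $\pi_i$, Lemma~\ref{le:lemma3} already guarantees that $|G_i|$ is the minimum number of $r$-guards needed to cover $\pi_i$ when guards are restricted to the corridor $\varsigma_i$, and Lemma~\ref{le:lemma02} shows each tooth edge forces at least one guard in its orthogonal shadow, so this count is also a global lower bound for covering $\pi_i$. Second, across sub-polygons I would invoke the observation stated just before Claim~\ref{cl:claim1}: since no interior point of $\varsigma_i$ is $r$-visible from any point of $\varsigma_j$ for $i \neq j$, the visibility regions of guards placed in different corridors do not interact helpfully, so summing the per-part minima gives a valid global minimum. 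Combined with Lemma~\ref{le:lemma2}, which ensures the cut rectangles are assigned to the side that never inflates the total count, we obtain $|G| = \sum_i \textit{GuardNumber}_i = \textit{OPT}(P)$.

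For the running time I would argue that the vertical decomposition is computed in $O(n)$ time since $m = (n-2)/2$. Algorithm~\ref{al:algo1} makes a single left-to-right pass over $R$, with each iteration doing $O(1)$ work plus an amortized constant charge for the relabeling of removed rectangles, so the decomposition runs in $O(n)$ time and produces $\sum_i |\pi_i| = O(n)$ total vertices across all parts. Algorithm~\ref{al:algo2} on a balanced part of size $n_i$ runs in $O(n_i)$ time: it performs one scan over $E_L$, one over $E_U$, a linear merge of the two sorted position lists, and a final linear pass to collapse overlapping positions. Summing over $i$ gives $O(n)$ for the second phase as well.

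The main obstacle I expect is the global optimality argument rather than the running time, specifically the claim embedded in Claim~\ref{cl:claim1} that restricting guards to the corridors $\varsigma_1, \ldots, \varsigma_k$ loses nothing. The delicate point is that a guard placed off-corridor inside some $\pi_i$ could in principle see into a neighboring part through the cut rectangle; I would handle this by showing that any such guard can be shifted horizontally onto the corridor without losing visibility of any pixel it originally covered, using the fact that each $\pi_i$ is balanced and $x$-monotone so horizontal translation inside $\varsigma_i$ preserves $r$-visibility of tooth shadows. Once this exchange argument is in place, together with the case analysis of Lemma~\ref{le:lemma2} on how $R_{cut}$ is assigned, the global bound $|G| \leq \textit{OPT}(P)$ follows, and combined with the matching lower bound from the tooth-edge lemma the theorem is established.
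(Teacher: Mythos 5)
Your proposal matches the paper's own argument essentially step for step: Algorithm~\ref{al:algo1} with the cut-rectangle rule of Lemma~\ref{le:lemma2} for the decomposition into balanced parts, Algorithm~\ref{al:algo2} with Lemmas~\ref{le:lemma02} and~\ref{le:lemma3} for per-part optimality, corridor disjointness for summing the per-part minima, and the same $O(n)$ accounting for both phases. If anything, you are more careful than the paper, which merely asserts Claim~\ref{cl:claim1} without the horizontal-shifting exchange argument you sketch to justify restricting guards to the corridors.
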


\section{Conclusion}
We studied the problem of finding the minimum number of r-guards for an orthogonal path polygon. This problem is a well-known version that is named \textit{orthogonal art gallery problem}. The total target in the orthogonal art gallery problem is finding the optimum set of r-guards $ G $ which is a set of point guards in polygon $ P $ that all points of the $ P $ are orthogonally visible from at least one r-guard in $ G $. We present an exact optimum algorithm for finding the guard set for path galleries. We solved this problem in the linear time according to $ n $ where $ n $ is the number of sides of path polygon. the space complexity of our algorithm is $ O(n) $, too. Many of the algorithms presented in this field are based on graph theory, but our proposed algorithm is based on geometric approach which is presented in paper~\cite{hoorfar2017minimum}. This approach can lead to improved performance and efficiency in the algorithms. We use our previous strategy~\cite{hoorfar2017minimum} that was provided a purely geometric algorithm for the orthogonal art gallery problem where the galleries are monotone and extending the algorithm for path galleries. Actually, we improved the time complexity of the orthogonal art gallery problem for path polygons from $ O(n^{17} poly\log n) $-time~\cite{worman2007polygon} to linear-time. For the future works, we want to try to solve this problem for every simple orthogonal polygon with/without holes. Both time and space complexity of our presented algorithm is order of $ O(n) $ and it is the best for these galleries.

\bibliographystyle{plain}
\bibliography{bibfile}

\end{document}